\newtheorem{claim}{Claim}
\newtheorem{conjecture}{Conjecture}
\definecolor{ltgray}{rgb}{0.95,0.95,0.95}
\begin{document}
\title{Simple Communication Complexity Separation \\ from Quantum State Antidistinguishability}
\author{\foreignlanguage{czech}{Vojtěch Havlíček}}
\email{vojtech.havlicek@keble.ox.ac.uk}
\author{Jonathan Barrett}
\affiliation{Department of Computer Science, University of Oxford, Wolfson Building, Parks Road, Oxford
OX1 3QD, UK} 

\begin{abstract}
A set of $n$ pure quantum states is called antidististinguishable if there exists an $n$-outcome measurement that never outputs the outcome `$k$' on the $k$-th quantum state. We describe sets of quantum states for which any subset of three states is antidistinguishable and use this to produce a two-player communication task that can be solved with $\log d$ qubits, but requires one-way communication of at least $\log (4/3) (d-1) - 1 \approx 0.415 (d-1) - 1$ classical bits. The advantages of the approach are that the proof is simple and self-contained -- not needing, for example, to rely on hard-to-establish prior results in combinatorics -- and that with slight modifications, non-trivial bounds can be established in any dimension $\geq 3$. The task can be framed in terms of the separated parties solving a relation, and the separation is also robust to multiplicative error in the output probabilities. We show, however, that for this particular task, the separation disappears if two-way classical communication is allowed. Finally, we state a conjecture regarding antidistinguishability of sets of states, and provide some supporting numerical evidence. If the conjecture holds, then there is a two-player communication task that can be solved with $\log d$ qubits, but requires one-way communication of $\Omega (d \log d)$ classical bits.
\end{abstract}

\maketitle

\paragraph{Introduction.}
How difficult is it to communicate classically the identity of a quantum state in an entanglement-unassisted scenario? Specifying a pure state of a qubit requires two real numbers, so communicating its identity seemingly needs an infinite amount of classical information \cite{Cerf00}. This suspicion is confirmed by the results of  \cite{Hardy04}, which show that if the two communicating parties do not share random data, an unbounded amount of classical communication is needed to exactly simulate results of quantum experiments.  Assuming shared randomness however, only two bits suffice to exactly reproduce results of any projective qubit measurement  \cite{Toner03}. 

Here we derive a lower bound for the classical communication cost of simulating the transmission of a $d$-dimensional quantum state. This is done by describing a communication problem based on a relation, and proving an exponential separation between the classical and quantum communication complexities. The proof uses quantum state antidistinguishability, a concept that has been studied in the foundations literature \cite{PBR12, Caves02, Kerppo18, Leifer14, Barrett14}, and lies behind the theorem of Ref.~\cite{PBR12}, which rules out $\Psi$-epistemic ontological models of quantum theory. 

It is already well known that there can be exponential separation between classical and quantum communication complexities. Ref.~\cite{Buhrman98} (see also Ref.~\cite{Brassard99}), for example, presents an $\Omega({d})$ vs $O(\log d) $ separation between classical and quantum zero-error communication complexities for the task of deciding whether two $d$-bit inputs are either equal or have Hamming distance $d/2$. The proof presented involves a highly non-trivial combinatorial result \cite{Frank87}, and results in a classical lower bound of the form $c d$, for a constant $c = 0.01$. (Slightly strengthened, the same proof yields a lower bound of $c d$ for any $c < 0.02$, which means that a non-trivial separation between the number of classical bits that must be communicated and the number of qubits can be established for any $d\geq 512$.) Subsequent works \cite{Raz99, Ambainis03, BarYossef08, Gavinsky06b, Kerenidis06, Gavinsky06, Gavinsky08, Klartag10, Montanaro11} established exponential separations between zero-error quantum and bounded-error classical protocols. The separations of Refs.~\cite{Gavinsky08, Klartag10} in particular, are strong in the sense that the separation holds between zero-error one way quantum protocols and bounded-error classical protocols that allow two-way communication. In none of Refs.~\cite{Raz99, Ambainis03, BarYossef08, Gavinsky06b, Kerenidis06, Gavinsky06, Gavinsky08, Klartag10, Montanaro11}, however, is a classical lower bound of $\Omega(d)$ established: varying separations are presented of which the strongest is $\Omega(\sqrt{d})$ vs $O(\log d) $. Ref.~\cite{Montanaro16q}, by considering a task based on distributed Fourier sampling, derives an $\Omega(d)$ vs $O(\log d)$ separation, which is robust against constant additive error, and which holds when two-way classical communication is allowed. For a 2010 review, see Ref.~\cite{Buhrman10}.

Montina \cite{Montina11}, considers the scenario in which Alice's input can be any pure state of a $d$ dimensional quantum system, and Bob's input can be any two-outcome measurement consisting of a rank $1$ projector and the orthogonal projector. Assuming zero-error, one-way classical communication, Montina derives a classical lower bound of $c d$, for $c \approx 0.293$, where the proof uses a result concerning volumes of subsets of a hypersphere due to Raigorodskii \cite{Raigorodskii99}. 
It is also shown that a classical lower bound of $d-1$ follows from (a complex generalization of) a mathematical conjecture known as the \emph{double cap conjecture}.

Finally, other works have had slightly different aims from that of establishing quantum-classical separations in the standard communication complexity setting, but nonetheless contain techniques related to those that we use. These include Ref.~\cite{Buhrman01}, which presents quantum fingerprinting protocols, and Refs.~\cite{Perry15, ZiWen16}, where the notion of antidistinguishability is used to give separations between one-way communication and information complexities in exclusion games. They also include Ref.~\cite{Karanjai18}, where lower bounds on the size of a classical memory needed to simulate quantum processes are derived, and applied to the stabilizer subtheory of quantum theory. Ref.~\cite{Heinosaari19} defines and studies tasks of communicating `partial ignorance', including communication tasks using antidistinguishable quantum states that are similar to those used here.  

One of the main motivations for our work is to present a proof of exponential separation between classical and quantum communication complexity that is very simple, and self-contained. Aside from this, the advantages of the approach include (i) a lower bound for zero-error one way classical communication of $c d$, with $c\approx 0.415$, which is the strongest that we have seen, and (ii) a separation between quantum and classical one-way communication complexity for any $d\geq 3$. On the negative side, we show that the classical communication lower bound disappears if two-way classical communication is allowed. Although the result is robust against a limited amount of additive noise, the lower bound also disappears if bounded error classical protocols are allowed.

We use asymptotic $O$-notation throughout \cite{Cormen09}. All logarithms are base $2$, $[n]$ denotes the set $\lbrace 1,2, \ldots , n\rbrace$ and $\lbrace 0, 1 \rbrace^*$ is the set of all finite bit-strings.

\paragraph{Communication complexity.}

Communication complexity studies the amount of communication needed to solve distributed computational problems \cite{Yao79, Yao93, Kushilevitz96, Rao19}. In a two-party relational task, Alice and Bob get inputs $x \in X$ and $y \in Y$ respectively, for finite sets $X,Y$, and do not see the other's input. The aim is for Bob to output $z \in Z$, such that $(x,y,z) \in R$ for a relation $R \subseteq X \times Y \times Z$. Both parties can use unlimited computational power and exchange messages following a shared communication protocol. In this work, we allow shared randomness, meaning that Alice and Bob share a random string $s \in \lbrace 0, 1 \rbrace^*$ sampled from a distribution $P(s)$. On any run of the protocol, the classical (quantum) communication cost is the number of transmitted bits (qubits). In general, this can depend on both the input and the value of $s$. The notion of communication complexity we use is the communication cost in the best possible protocol, where the communication cost is averaged over the shared random data, and evaluated on the worst-case input. Note that with this definition, the communication complexity with shared random data can be smaller than the deterministic communication complexity (where no randomness is permitted), even for zero error protocols \cite{Kushilevitz96}. The communication cost for the worst-case value of $s$ is always larger than that averaged over $s$, hence our lower bounds for classical communication complexity still apply if communication complexity is defined with respect to the worst-case value of $s$. One-way communication complexity assumes a protocol in which Alice is only allowed to send a single message to Bob, after which he announces the result. 
 
\paragraph{Antidistinguishability}

The quantum protocol that we will describe relies on \emph{antidistinguishable} sets of quantum states. A set of $n$ pure quantum states $\ket{\rho_1}, \ket{\rho_2}, \ldots , \ket{\rho_n}$ is antidistinguishable if there exists an $n$-outcome measurement $\Pi'  = \lbrace \Pi_{z}'  \, |  \; z \in [n] \rbrace$ that never outputs the outcome $z$ on the quantum state $\ket{\rho_z}$, i.e., 
\begin{align}
\Pi_{z}' \ket{\rho_z} &= 0, \; \forall\, z \in [n]. 
\label{Eq:Antidistinguishability} 
\end{align}
A sufficient condition for three pure quantum states $\ket{\rho_j}, \ket{\rho_k}, \ket{\rho_m}$ to be antidistinguishable with a projective measurement is if there exist orthogonal states $\ket{j}, \ket{k}, \ket{m}$, such that: 
\begin{equation}
\begin{aligned}
\ket{\rho_j} &= \cos \theta_i \ket{k}  +  e^{i\phi_i} \sin \theta_i \ket{m}, \\
\ket{\rho_k} &= \cos \theta_j \ket{m}  + e^{i\phi_j} \sin \theta_j \ket{j}, \\
\ket{\rho_m} &=  \cos \theta_k \ket{j}  + e^{i\phi_k} \sin \theta_k \ket{k},
\end{aligned}
\label{Eq:PPIncomp}
\end{equation}
for some $\theta_z, \, \phi_z \in \mathbb{R}$, $z \in \lbrace j,k,m \rbrace$. The notion of antidistinguishability was introduced by Caves, Fuchs and Schack in Ref.~\cite{Caves02}, where it is shown that  that such a basis can be found iff for:
 \begin{align*} a &= |\braket{\rho_j|\rho_k}|^2, & b &= |\braket{\rho_j|\rho_m}|^2, & c &= |\braket{\rho_m|\rho_k}|^2,\end{align*}
the following holds \footnote{As also noted in Ref.~\cite{Stacey14}, Ref.~\cite{Caves02} contains a minor error, in which $>$ instead of $\geq$ appears in the second part of the condition.}:
\begin{align*}
a+b+c &<  1, &
(1-a-b-c)^2 &\geq 4abc.
\end{align*}
As a simple corollary, any triple of pure quantum states is antidistinguishable if:
\begin{align}  a,b,c \leq \frac{1}{4}.
\label{Eq:AD}
\end{align}
Now consider a finite set $S$ of pure states $\ket{\rho_1}, \ket{\rho_2}, \ldots , \ket{\rho_{|S|}}$, for which: \begin{align*} |\braket{\rho_i|\rho_j}| \leq \delta; \; \forall \, i \neq j, \end{align*} where $\delta \in [0,1)$. Such sets are called \emph{complex spherical codes}, and have been much studied, with applications in classical signal processing, error correction, and quantum information \cite{Welch74, Levenshtein78, Levenshtein83, Buhrman01, RenesPHD04, Roy11, Zorlein15, Conde17}. Our results will be obtained from the following simple observation:
\begin{claim} 
Any triple of states drawn from a  complex spherical code $S$ with $\delta \leq 1/2$ is antidistinguishable. \label{Corollary:Triples} 
\end{claim}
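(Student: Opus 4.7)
The plan is to apply the sufficient condition Eq.~(\ref{Eq:AD}) directly, since the excerpt has already packaged the Caves--Fuchs--Schack criterion into the convenient form $a,b,c \leq 1/4$. So the proof reduces to checking that the spherical code hypothesis forces the pairwise squared overlaps into this range.

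First I would fix an arbitrary triple $\ket{\rho_j}, \ket{\rho_k}, \ket{\rho_m}$ drawn from $S$. By the definition of a complex spherical code with parameter $\delta$, every pairwise overlap satisfies $|\braket{\rho_i|\rho_j}| \leq \delta$, and the hypothesis $\delta \leq 1/2$ then gives $|\braket{\rho_i|\rho_j}|^2 \leq 1/4$. Setting $a = |\braket{\rho_j|\rho_k}|^2$, $b = |\braket{\rho_j|\rho_m}|^2$, $c = |\braket{\rho_m|\rho_k}|^2$, each of these three quantities is at most $1/4$, and Eq.~(\ref{Eq:AD}) immediately yields antidistinguishability of the triple, as witnessed by a projective measurement of the form in Eq.~(\ref{Eq:PPIncomp}).

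There is no substantive obstacle here, as the claim is a one-line corollary of the bound already derived from the Caves--Fuchs--Schack condition. It is worth noting, however, that the threshold $\delta \leq 1/2$ is exactly tight with respect to this approach: the saturating case $a = b = c = 1/4$ is precisely where the auxiliary inequality $(1 - a - b - c)^2 \geq 4abc$ from the Caves--Fuchs--Schack condition becomes an equality. Any attempt to push the argument via Eq.~(\ref{Eq:AD}) to larger $\delta$ therefore fails at the boundary, and this is what pins down $\delta = 1/2$ as the natural choice when constructing the code used in the subsequent communication protocol.
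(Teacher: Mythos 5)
Your proof is correct and follows exactly the route the paper intends: the claim is presented there as an immediate corollary of the condition $a,b,c \leq 1/4$ in Eq.~(\ref{Eq:AD}), which is precisely what $\delta \leq 1/2$ guarantees for every pair in the triple. Your side remark that $a=b=c=1/4$ saturates $(1-a-b-c)^2 \geq 4abc$ is also accurate and correctly explains why $\delta = 1/2$ is the natural threshold for this construction.
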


\paragraph{Separations.}

The separation between classical and quantum communication complexities that we establish is for solution of a relational task, as follows. For a $d$ dimensional Hilbert space, let $S$ be a complex spherical code $S = \{ \ket{\rho_1}, \ket{\rho_2}, \ldots \ket{\rho_{|S|}} \} $, with $\delta = 1/2$. Alice's input is an integer $i \in [|S|]$. Bob's input is a set of $3$ integers $j,k,m \in [|S|]$. Bob must output one of the integers $j,k,m$, under the constraint that his output must not be equal to Alice's input. Setting $X = Z = [|S|]$, and $Y = \{ \, T\,  | \, T\subseteq S, |T|=3\, \}$, the relation is thus given by
\begin{equation}
\begin{aligned}
R \subseteq X \times Y & \times Z: \\ 
(i , \{ j,k,m\} , z) & \in R \text{ iff } z \in \{ j,k,m \}  \mathrm{\ and\ } z \ne i. \label{Eq:Relation}
\end{aligned}
\end{equation}  

The quantum solution is straightforward (see Fig.~\ref{Fig:Task}). 
\begin{enumerate}
\item Given input $i$, Alice prepares a quantum system in the state $\ket{\rho_i} \in S$ and sends it to Bob. 
\item Given input $\{ j,k,m\}$, Bob performs an antidistinguishing measurement for the three states  $\ket{\rho_j} ,  \ket{\rho_k},  \ket{\rho_m}$ on the system he receives from Alice. Label the three outcomes $\Pi'_{z}$, for $z \in \{j,k,m \}$, such that $\Pi'_{z} \ket{\rho_z} = 0$. If Bob obtains outcome $\Pi'_{z}$, then he outputs $z$. ({This means that Bob never outputs an outcome `$i$'.})
\end{enumerate}
The communication complexity is the number of qubits transmitted, which is $\lceil \log d \rceil$. 
\begin{figure}[t]
\begin{tikzpicture}
\node[circle, draw, thick, inner sep=2pt] (A) at(-1.5,0) {\small \text{A}};
\node[circle, draw, thick, inner sep=2pt] (B) at(1.5,0) {\small \text{B}};
\node[rectangle] (S) at (-2,-1) {$i$};
\node[rectangle] (M) at (2,-1) {$\{j,k,m\}$};
\node[rectangle] (m) at (2.5,0) {$z$};
\draw[->] (A) --node [above,midway] {$\ket{\rho_i}$}  (B);
\draw[->](S) -- (A);
\draw[->](M) -- (B);
\draw[dashed, ->](B) -- (m);
\end{tikzpicture}
\caption{A quantum protocol. Alice gets an integer $i \in [|S|]$, and Bob gets three integers $j,k,m \in [|S|]$. Alice sends a quantum system in the state $\ket{\rho_i}$ to Bob. Bob performs an antidistinguishing measurement for the states $\ket{\rho_j} , \ket{\rho_k} , \ket{\rho_m}$, and outputs the outcome.}
\label{Fig:Task}
\end{figure}
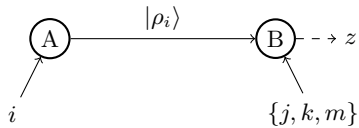
A classical protocol is illustrated in Fig.~\ref{Fig:Protocol}.
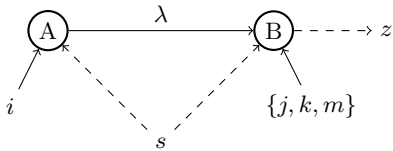
\begin{figure}[t]
\begin{tikzpicture}
\node[circle, draw, thick, inner sep=2pt] (A) at(-1.5,0) {\small \text{A}};
\node[circle, draw, thick, inner sep=2pt] (B) at(1.5,0) {\small \text{B}};
\node[rectangle] (S) at (-2,-1) {$i$};
\node[rectangle] (M) at (2,-1) {$\lbrace j,k,m \rbrace$};
\node[rectangle] (s) at (0, -1.5) {$s$};
\node[rectangle] (m) at (3,0) {$z$};
\draw[->] (A) --node [above,midway] {$\lambda$}  (B);
\draw[->](S) -- (A);
\draw[->, dashed](s) -- (A);
\draw[->](M) -- (B);
\draw[->, dashed](s) -- (B);
\draw[dashed, ->](B) -- (m);
\end{tikzpicture}
\caption{A classical protocol. Alice gets an integer $i \in [|S|]$, and Bob gets three integers $j,k,m \in [|S|]$. Alice sends a message $\lambda$, after which Bob outputs $z \in \lbrace j,k,m \rbrace$. The classical strategy can use a shared random string $s$, drawn according to a probability distribution $P(s)$. }
\label{Fig:Protocol}
\end{figure}

\begin{claim}
\label{Thm:OneWay}
The zero-error one-way classical communication complexity of the task is at least $\lceil \log |S| -1 \rceil$. 
\end{claim}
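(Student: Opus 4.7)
The plan is to reduce the problem to a deterministic setting by conditioning on the value $s$ of the shared random string. Once $s$ is fixed, Alice's message $\lambda = \alpha_s(i)$ depends only on her input $i$, and Bob's output is a deterministic function of $(\{j,k,m\}, \lambda)$. I would then analyze Alice's message function by grouping inputs: let $L_\lambda := \alpha_s^{-1}(\lambda) \subseteq [|S|]$ be the preimage of the message $\lambda$ under Alice's encoding.

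The main step is a ``confusion'' argument showing $|L_\lambda| \leq 2$ for every $\lambda$ that Alice actually transmits. Suppose instead that $|L_\lambda| \geq 3$, and pick three distinct inputs $i_1, i_2, i_3 \in L_\lambda$. Feed Bob the triple $\{i_1, i_2, i_3\}$. In all three possible cases for Alice's input, she sends the same message $\lambda$, so Bob's output is a single fixed element $i_r \in \{i_1, i_2, i_3\}$; but then letting Alice's input actually be $i_r$ violates the zero-error constraint $z \neq i$. Hence $|L_\lambda| \leq 2$, so Alice must use at least $\lceil |S|/2 \rceil$ distinct messages, forcing her worst-case message length to be at least $\lceil \log \lceil |S|/2 \rceil \rceil \geq \lceil \log|S| - 1 \rceil$ bits. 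Because this lower bound holds for every $s$ in the support of $P(s)$, it is preserved by averaging over the shared randomness.

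I expect the only real subtlety to lie in the interpretation of ``zero error'' in the presence of shared randomness: I need to confirm that zero error forces the protocol to succeed for every $s$ in the support (equivalently, with probability one), so that conditioning on $s$ and reasoning deterministically is legitimate. Once that is settled, the combinatorial core is a short pigeonhole observation, which fits the paper's stated aim of a simple, self-contained separation and avoids any appeal to external combinatorial machinery.
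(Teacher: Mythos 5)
Your combinatorial core is exactly the paper's argument: the ``confusion'' step showing that any message sent for three distinct inputs $i_1,i_2,i_3$ fails on Bob's input $\{i_1,i_2,i_3\}$, hence $|L_\lambda|\le 2$, hence at least $\lceil |S|/2\rceil$ distinct messages. That part is correct and is all the paper does for deterministic protocols.

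The gap is in your final sentence. You identify the subtlety as being whether zero error forces success for every $s$ in the support (it does, and the paper agrees), but the real subtlety is elsewhere: the paper's definition of communication complexity is the cost \emph{averaged over $s$ and then maximized over inputs}, i.e.\ $\max_i \mathbb{E}_s[\mathrm{cost}(i,s)]$. Your per-$s$ argument gives a \emph{worst-case-input} bound for each fixed $s$, i.e.\ $\mathbb{E}_s[\max_i \mathrm{cost}(i,s)] \ge \lceil \log|S|-1\rceil$, and since $\max_i \mathbb{E}_s \le \mathbb{E}_s \max_i$, this does not transfer by ``averaging'': the expensive input may be a different one for each value of $s$, so that no single input is expensive on average. (Toy example: two equiprobable values of $s$ and two inputs, with costs $10,0$ under $s_1$ and $0,10$ under $s_2$; every $s$ has worst-case cost $10$ but $\max_i\mathbb{E}_s = 5$.) The paper closes this with the standard Yao manoeuvre: lower-bound $\max_i \mathbb{E}_s[\mathrm{cost}]$ by the cost averaged over both $s$ and a uniform distribution $Q$ on inputs, which in turn is at least the $Q$-average cost of the best deterministic zero-error protocol; since such a protocol uses at least $|S|/2$ messages with each message covering at most two inputs, the average message length under uniform inputs is still at least $\log|S|-1$. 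You need this (or an equivalent distributional argument) to finish; as written, the last step of your proof does not follow.
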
 
\begin{proof}
In order to establish the claim, consider first deterministic protocols, in which the message that Alice sends to Bob is a function of her input, and Bob's output is a function of his input and the message. Suppose that there are three distinct values of Alice's input, $ \ket{\rho_j}, \ket{\rho_k}, \ket{\rho_m}$, such that the same message $\lambda$ is sent for each of them. If Bob receives $\lambda$, and his input is the triple $\{ j,k,m \}$, then there is no output he can give that will not sometimes generate an error. Therefore Alice needs at least a distinct message per two states of $S$. This gives:
\begin{align}
|\Lambda| \geq \frac{|S|}{2},\label{Eq:LowerBound}
\end{align}
where $\Lambda$ is the set of possible values of Alice's message. It follows that on the worst case input, Alice needs to send at least $\lceil \log |S|- 1 \rceil$ bits. 

In the presence of shared randomness $s$, each value of $s$ defines a deterministic protocol, which in the zero-error case must respect the relation. If communication complexity is evaluated as the communication cost on the worst case values of $s$, this concludes the proof. With communication complexity given by the communication cost averaged over $s$, and evaluated on the worst case input, the following standard manoeuvre \cite{Kushilevitz96} suffices. For any probability distribution $Q$ over input pairs, the communication complexity is lower bounded by the communication cost, averaged both over values of $s$ and over inputs drawn according to the distribution $Q$. This is in turn lower bounded by the communication cost, averaged over inputs drawn according to the distribution $Q$, of the deterministic protocol that achieves the lowest value for this cost. Choosing $Q$ as the uniform distribution over all input pairs, the argument above establishes that the lower bound of $\lceil \log |S|- 1 \rceil$ bits still holds. Same bound straightforwardly applies to a multiplicative error sampling task outlined in Appendix A.
\end{proof}

The lower bound of Claim~\ref{Thm:OneWay} is determined by $|S|$, the size of the spherical code $S$.  In contrast, for a fixed dimension $d$, regardless of the size of $|S|$, the quantum protocol uses only $\lceil \log d \rceil$ qubits. This gives a communication advantage whenever $|S| > 2d$.  For the best separation, the problem becomes: how large can a complex spherical code in $d$ dimensions be, with $\delta = 1/2$?

In $d=3$, the largest such set is given by an equiangular complex spherical code, otherwise known as a symmetric, informationally-complete set (SIC) \cite{Renes04, Scott09, Scott17, Stacey14}.
A SIC in dimension $3$ consists of $9$ unit vectors such that: 
\begin{align}
 |\braket{\rho_i | \rho_j}|^2 = \frac{1}{4}, \; \forall \, i \neq j.
 \label{Eq:SIC}
\end{align}
That a larger set cannot be found follows from the Welch bound \cite{Welch74}, which states that: 
\begin{align}
\max_{i \neq j} |\braket{\rho_i | \rho_j}|^2 \geq \frac{|S| - d}{d(|S|-1)}, \label{Eq:Welch}
\end{align}
for any set of $d$-dimensional pure states $S = \lbrace \ket{\rho_1}, \ket{\rho_2} ,\ldots \ket{\rho_{|S|}} \rbrace$. This shows that Alice needs to send at least a $5$-valued message vs a $3$-dimensional quantum system, or in terms of whole numbers of bits and qubits, at least $3$ bits vs. $2$ qubits. 

In $d\geq 4$, mutually unbiased bases (MUBs) yield larger sets than SICs. It is known that for $d$ power prime, there exist $d+1$ distinct MUBs  \cite{Wooters89}, which satisfy $|\braket{\rho_i | \rho_j}|^2 \leq 1/d$. Taking $S$ as the union of the vectors in the MUBs gives $|S| = d(d+1)$, hence a $\lceil \log(d^2 + d) - 1 \rceil$ lower bound on classical communication. In $d=4$, Alice needs to send at least a $10$-valued message vs. a $4$-dimensional quantum system, or in terms of whole numbers of bits and qubits, at least $4$ bits vs. $2$ qubits. A bound due to Levenshtein \cite{Levenshtein78, Levenshtein83, Zorlein15} implies that in $d=4$, the $20$-element set of vectors given by MUBs is the largest that can be achieved with $\delta =1/2$. In $d\geq 5$, larger sets than those given by MUBs have been found numerically \cite{Zorlein15, Conde17}.

For general $d$, the following supplies a classical lower bound of $\lceil \log (4/3) (d-1) -1 \rceil \approx 0.415 (d - 1) - 1$ bits.
\begin{claim}
For any $d$, there exists a complex spherical code, with $\delta=1/2$, such that $|S| \geq \left( \frac{4}{3} \right)^{d-1}$.
\label{Claim:LargeCodes}
\end{claim}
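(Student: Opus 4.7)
The plan is to use a greedy volume argument on the complex unit sphere $S^{2d-1} \subset \mathbb{C}^d$. I will construct $S$ one vector at a time: at each step, add any unit vector whose absolute inner product with every previously chosen state is at most $1/2$. This is possible as long as the union of ``forbidden caps'' around the already-chosen vectors fails to cover $S^{2d-1}$, so the whole argument reduces to a single measure estimate plus a union bound.

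The first step is to compute, for a fixed $\ket{v} \in \mathbb{C}^d$, the normalized Haar measure of the forbidden cap $C_v = \{\ket{u} \in S^{2d-1} : |\braket{v|u}| > 1/2\}$. By unitary invariance I can take $\ket{v}$ to be a basis vector, and writing $\ket{u} = \alpha \ket{v} + \beta \ket{v^\perp}$ with $|\alpha|^2 + |\beta|^2 = 1$, the condition reduces to $|\alpha|^2 > 1/4$. For a Haar-random pure state, $|\alpha|^2$ has density $(d-1)(1-x)^{d-2}$ on $[0,1]$, so a direct integration gives measure exactly $(3/4)^{d-1}$ for $C_v$.

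The second step is the union bound. After $\ket{\rho_1}, \ldots, \ket{\rho_n}$ have been chosen, the union $\bigcup_{i=1}^n C_{\rho_i}$ has measure at most $n (3/4)^{d-1}$. Whenever this quantity is strictly less than $1$, the complement contains a unit vector $\ket{\rho_{n+1}}$ satisfying $|\braket{\rho_i | \rho_{n+1}}| \leq 1/2$ for all $i \leq n$, and the construction can be extended. Hence the process can run as long as $n < (4/3)^{d-1}$, so it terminates with $|S| \geq (4/3)^{d-1}$, as claimed.

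The only substantive step is the cap-measure calculation, and that is a short elementary integration; there is no combinatorial or analytic obstacle of the sort that appeared in earlier separations relying on \cite{Frank87} or \cite{Raigorodskii99}. Accordingly, I do not expect any genuine difficulty: the strength of the bound is limited only by the crudeness of the union bound, and sharper estimates in low dimension (such as those obtained from SICs in $d=3$ or MUBs in $d \geq 4$) would require entirely different, explicit constructions rather than improvements to this existence argument.
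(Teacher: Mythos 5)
Your proof is correct and takes essentially the same approach as the paper's: both arguments rest on the observation that the forbidden caps of a maximal (equivalently, greedily grown) code must cover the complex unit sphere, combined with the computation that each cap occupies a fraction $(3/4)^{d-1}$ of it. The only cosmetic difference is that you obtain the cap measure from the $\mathrm{Beta}(1,d-1)$ density of $|\langle v|u\rangle|^2$, whereas the paper evaluates the same quantity by an explicit integral over the corresponding real $(2d-1)$-sphere.
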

\begin{proof}
The claim is established by generalizing a well known result of Chabauty, Shannon and Wyner \cite{Chabauty53, Shannon59, Wyner65, Jenssen18} to the case of complex vector spaces. Consider, for each vector $|e\rangle \in S$, the complex spherical cap $A^d_{\theta}$, defined as the set of all vectors $\ket{\psi}$ in the Hilbert space satisfying $| \langle e | \psi \rangle |^2 \geq \cos^2 \theta$, for some $0 \leq \theta \leq \pi/2$. If $S$ is as large as possible under the constraint $\delta = 1/2$, then for $\theta = \pi/3$, these caps must cover the whole of the complex unit sphere -- otherwise there is room to add another vector to $S$. Therefore, a simple lower bound on the achievable $|S|$ is given by
\begin{equation}\label{volumesforlowerbound}
|S| \geq \frac{V^d}{V^d_{\pi/3}},
\end{equation}
where $V^d_{\theta}$ is the volume of a spherical cap $A^d_{\theta}$, and $V^d = V^d_{\pi/2}$ is the volume of the unit sphere in $d$ complex dimensions, volumes being evaluated according to some suitable measure.

The following calculation (with different $\theta$, in the context of a different method for establishing a communication complexity separation) appears in Ref.~\cite{Montina11}. In keeping with our main motivation of providing a simple and self-contained proof of exponential separation, we reproduce the reasoning here.

The points of the unit sphere in $d$ complex dimensions are in $1-1$ correspondence with the points of the unit sphere in $2d$ real dimensions, under the obvious mapping that takes the real and imaginary parts of each complex coordinate to two independent classical coordinates. Volumes of subsets of the complex unit sphere can therefore be defined as the volumes of the corresponding subsets of the real unit sphere in $2d$ dimensions. Letting $\ket{e} = (1,0,\ldots , 0) \in \mathbb{C}^d$, the complex spherical cap $A^d_{\theta}$ maps to the set of real vectors of the form:
\begin{equation}
\cos \phi \, \hat{u}_1 + \sin \phi \, \hat{u}_2,
\end{equation}
where $\hat{u}_1 \in \mathbb{R}^{2d}$ is a unit vector in the subspace spanned by $(1,0,0, \ldots ,0)$ and $(0,1,0,\ldots , 0)$, $\hat{u_2}$ is a unit vector in the orthogonal subspace, and $0\leq \phi \leq \theta$. 
The volume of this set is given by:
\begin{equation}\label{expressionforvolume}
V^d_\theta =   \int_0^{\theta} 2\pi \cos \phi \ \tilde{V}^{2d-2}(\sin\phi ) \ \mathrm{d}\phi,
\end{equation}
where $\tilde{V}^{2d-2}(\sin\phi )$ is the volume of a $(2d-3)$-sphere in $(2d-2)$ real dimensions of radius $\sin\phi$. 
Combining Eqs.~(\ref{volumesforlowerbound}) and (\ref{expressionforvolume}), with $\theta = \pi/3$, yields
\[
|S| \geq \left( \frac{4}{3} \right)^{d-1} .
\]
\end{proof}

An alternative proof that there exist sets $S$ with $|S|$ exponentially large, which results in a worse bound, but which some may find even simpler, is given in Ref.~\cite{Buhrman01}, and reproduced in Appendix B.

\paragraph{Two-way classical communication.}
\begin{claim}
Assuming $|S| = 2^q$, for integer $q$, the two-way classical communication complexity of $R$ is  at most $\lceil \log \log |S| \rceil + 1$ bits. 
\label{Claim:TwoWay}
\end{claim}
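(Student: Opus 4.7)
My plan is to give a short two-round protocol that exploits the fact that Bob sees three distinct labels. I would encode each element of $[|S|]$ as a binary string of length $q := \log |S|$ in the natural way, and then search for a single bit-coordinate that settles the output.

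The key step is the following observation: since $j, k, m$ are pairwise distinct $q$-bit strings, there must exist at least one coordinate $p \in [q]$ at which the triple $(j_p, k_p, m_p)$ is not constant --- for instance, take any position at which $j$ and $k$ already disagree. Bob can locate such a $p$ from his input alone, and transmit it to Alice using $\lceil \log q \rceil = \lceil \log \log |S| \rceil$ bits.

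Alice then replies with the single bit $i_p$. Because $\{j_p, k_p, m_p\}$ contains both $0$ and $1$, at least one $z \in \{j, k, m\}$ satisfies $z_p \neq i_p$, and Bob outputs any such $z$. Since $z$ and $i$ differ at position $p$ we have $z \neq i$, so $(i, \{j,k,m\}, z) \in R$, and the total number of exchanged bits is $\lceil \log \log |S| \rceil + 1$, as required.

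There is essentially no obstacle here --- the whole content of the claim is the realisation that Bob can privately pinpoint a coordinate on which the triple is non-constant, so that a single bit of $i$ at that coordinate suffices to eliminate one of the three candidates. The only fact to verify is the existence of such a coordinate, which is immediate from the pairwise distinctness of $j, k, m$.
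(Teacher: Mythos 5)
Your protocol is correct, and it reaches the same $\lceil \log\log|S|\rceil + 1$ bound by a genuinely different (and simpler) route than the paper. The paper has Bob search for the \emph{largest} scale $r$ at which his three inputs split $2$--$1$ or $1$--$2$ across two \emph{adjacent} dyadic blocks of size $2^r$; Alice then sends the parity of her block index $\lceil i/2^r\rceil$ at that scale. The adjacency requirement is essential there (non-adjacent blocks can share a parity), so the well-definedness of $r$ requires a moment's thought, and the payoff is an explicit picture of the four combinatorial rectangles covering $X\times Y$ (Fig.~3 of the paper). You instead observe that since $j\neq k$, their $q$-bit encodings disagree at some coordinate $p$; Bob announces $p$, Alice announces her bit $i_p$, and one of $j,k$ must differ from $i$ there. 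Your existence argument is immediate, your protocol never even uses $m$, and correctness is a one-line check --- in that sense it is cleaner than the paper's version, while inducing essentially the same rectangle structure implicitly (each transcript $(p, i_p)$ fixes a rectangle $X_{p,i_p}\times Y_p$ on which Bob's answer is error-free). The only cosmetic point is that $[|S|]=\{1,\ldots,2^q\}$ has $2^q$ elements, so you should encode $i-1$ (or fix some bijection with $\{0,1\}^q$) to get honest $q$-bit strings; this changes nothing.
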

\begin{proof}
The assumption that $|S| = 2^q$ is not essential, but allows a short statement of the proof. The protocol has two rounds and starts with a message from Bob to Alice. Let Bob's input be $\lbrace j,k,m \rbrace $, and assume without loss of generality that $j < k < m$. Bob determines the largest integer $r \leq q = \log |S| $, such that for some integer $s \geq 0$ either:
\begin{equation}
\begin{aligned}
s \, 2^{r} &< j,k \leq (s+1)\, 2^{r} \text{ and } \\
(s+1)\, 2^{r} &< m \leq (s+2)\, 2^{r},
\end{aligned}
\label{Eq:Case1}
\end{equation}
or 
\begin{equation}
\begin{aligned}
s \, 2^{r} &< j \leq (s+1)\, 2^{r} \text{ and } \\
(s+1)2^{r} &< k,m \leq (s+2)2^{r}.
\end{aligned}
\label{Eq:Case2}
\end{equation}
Bob sends $r$ to Alice using $\lceil \log\log|S| \rceil $ bits. Note that $r$ determines a subset $Y_r \subseteq Y$ of Bob's possible inputs. For input $i\in  [|S|]$, Alice computes the parity $p$ of $\left\lceil \frac{i}{2^r} \right\rceil$, and sends it to Bob. Note that the set $X$ of Alice's inputs is partitioned by $p$ and $r$ into subsets $X_{r,p} \subseteq X$. By Eqs.~(\ref{Eq:Case1}) and (\ref{Eq:Case2}), at least one of $j,k,m$ is not in $X_{r,p}$. Bob chooses such a value for his output. Communicating $\lceil \log \log |S| \rceil + 1$ bits hence suffices in the two-way scenario. Fig.~\ref{Fig:Upsilon} illustrates the protocol for $|S|=8$. \end{proof}
\begin{figure*}[t]
\includegraphics[scale=.39]{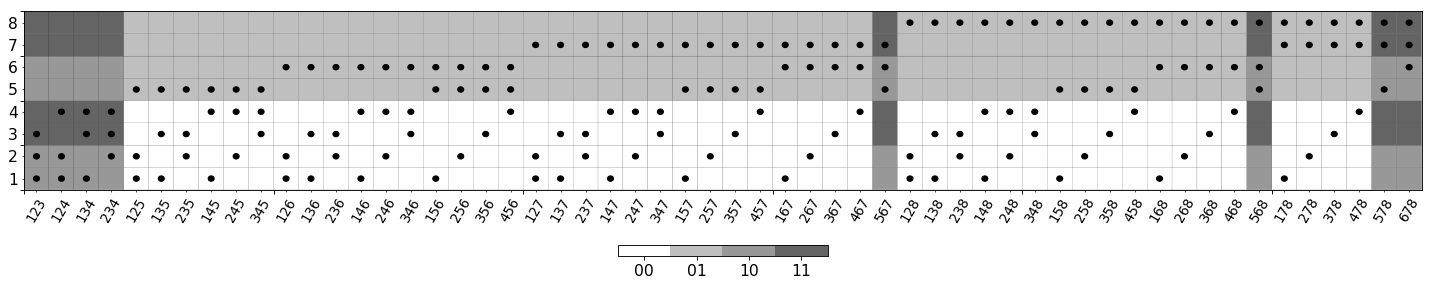}
\caption{An illustration of the two-way classical protocol with $|S|=8$. Rows correspond to different values of Alice's input, columns to different values of Bob's input. In single run of the protocol, Bob sends an integer $r$ to Alice, and Alice sends an integer $p$ to Bob. Each value of the conversation $(r,p)$ is compatible with a subset of the joint inputs, where the subset is of the form $X_{r,p} \times Y_r$ for $X_{r,p}\subset X$ and $Y_r \subset Y$, and is known as a combinatorial rectangle. In the $|S|=8$ example, there are four possible conversations, hence four rectangles, which cover $X\times Y$ as illustrated. No rectangle contains three dots in the same column, which implies that Bob can always produce a suitable output without error.} 
\label{Fig:Upsilon}
\end{figure*}

\paragraph{Classical bounded error.}

The separation also disappears with one-way classical communication, if the classical players need only solve the relation with bounded error. Roughly stated, the argument is that Alice can simply partition $S$ into $K$ subsets, and send $\lceil \log K \rceil$ bits to Bob to indicate in which subset her input lies. In a deterministic protocol, this risks an error whenever Bob's input $\{j,k,m\}$ corresponds to three states all within the same subset. However, the probability of error on the worst case input pair can be reduced to $1/K^2$ by allowing the shared random data to specify a partition, and varying over all partitions into $K$ subsets.

\paragraph{Conclusion.} 

We have used quantum state antidistinguishability to give simple proofs of separation between classical and quantum zero-error one-way communication complexities. Using SICs and MUBs, we proved separations in any $d\geq 3$. Using a lower bound on the achievable size of a suitable complex spherical code, we showed an exponential separation of $\log d$ qubits vs. $0.415 (d-1) - 1$ bits. For the relation considered, however, the separation disappears if two-way classical communication is allowed or if one-way classical communication is allowed with bounded error. 

The results are stated in terms of quantum and classical players solving a relation, which is defined on finite sets of possible inputs for Alice and Bob. Seeing as expanding the sets of inputs can only make things more difficult for classical players, the lower bounds also apply to the one-way classical communication cost of simulating an experiment in which Alice prepares an arbitrary pure state of a $d$-dimensional quantum system, and sends it to Bob, who performs an arbitrary projective measurement. (Naturally, the same can be said for the results of any of Refs.~\cite{Buhrman98, Brassard99, Raz99, Ambainis03, BarYossef08, Gavinsky06b, Kerenidis06, Gavinsky06, Gavinsky08, Klartag10, Montanaro11, Montanaro16q, Montina11}.) In this more general scenario, if error $\epsilon$ is tolerated in the measurement outcome probabilities in the classical simulation, then it is easy to see that transmission of $O(\log (1/\epsilon) d)$ classical bits is sufficient, simply by approximating the real and imaginary parts of the complex entries in the $d$-dimensional state vector. As discussed in Ref.~\cite{Montanaro16q}, this means that the results of Ref.~\cite{Montanaro16q} are asymptotically optimal for bounded error simulation. If exact simulation is required, on the other hand, then Ref.~\cite{Hardy04} shows that without shared randomness, bounded classical communication is insufficient. If exact simulation is required, and shared random data permitted, then surprisingly little is known apart from the exponential lower bounds. Ref.~\cite{Toner03} shows that for $d=2$, transmission of two bits is sufficient (see also Ref.~\cite{Cerf00} which considers positive operator-valued measurements). But we are unaware of any demonstration, even for $d=3$, that classical simulation of arbitrary preparations and projective measurements is possible with bounded communication cost, let alone a demonstration of a specific protocol with bounded communication, or a finite upper bound.

The exponential lower bounds are of interest for the foundations of quantum theory, as well as for communication complexity per se. As Montanaro writes \cite{Montanaro16q}: ``On a fundamental, conceptual level, the question asks: are quantum states `really' like an exponentially-long string of numbers, or do they have a more efficient representation?'' Restated in different language, any lower bound on the size of the classical message in a simulation of the transmission of a quantum state becomes a lower bound on the size of the set of ontic states that the system must have available to it in an ontological model for quantum theory \cite{HarriganSpekkens, PBR12}. For further discussion of the relevance of communication complexity results to quantum foundations, see Ref.~\cite{Montina12}.

The importance of the exponential separation between quantum and classical communication, both for foundations and for complexity theory, underpins one of the motivations of this work, which is to present as simple as possible a proof of this fact, and to obtain as strong a lower bound as possible. In future work, it would be interesting to prove a stronger separation. Using essentially the same proof that we have presented, the following conjecture concerning antidistinguishability would give a separation of $\log d$ qubits vs $\Omega(d \log d)$ bits for exact simulation. This would be asymptotically stronger than all existing results, and would be particularly interesting given the $O(\log (1/\epsilon) d)$ upper bound for bounded-error classical simulation. 
\begin{conjecture}\label{antconj}
Let $|\rho_1\rangle , \ldots , |\rho_d\rangle$ be $d$ pure states. If $|\langle \rho_i | \rho_j\rangle| \leq (d-2)/(d-1)$ for all $i\ne j$, then the states are antidistinguishable.
\end{conjecture}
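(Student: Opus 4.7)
The plan is to attack the conjecture through the semidefinite-program dual of antidistinguishability, supplemented by an explicit verification at the symmetric configuration that saturates the hypothesis. Antidistinguishability of $\lbrace\ket{\rho_k}\rbrace$ is the SDP feasibility of $\Pi_k \succeq 0$, $\sum_k \Pi_k = I$, $\bra{\rho_k}\Pi_k\ket{\rho_k}=0$, which by Lagrangian duality fails iff there is a Hermitian $Y$ with $Y \preceq \ket{\rho_k}\bra{\rho_k}$ for every $k$ and $\operatorname{tr}(Y)>0$. The goal is to rule out such a $Y$ under the overlap hypothesis.

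First I would establish the structural constraint that any $Y$ dominated by a single rank-one projector has at most one positive eigenvalue: if $Y$ had two positive eigenvalues on an orthogonal pair, their span would contain a unit vector $\ket{v}$ orthogonal to $\ket{\rho_k}$ on which $\bra{v}Y\ket{v}>0$, contradicting $Y \preceq \ket{\rho_k}\bra{\rho_k}$. Accordingly write $Y = \lambda\ket{y}\bra{y} - N$ with $\lambda \geq 0$, $N \succeq 0$, $N\ket{y}=0$, so $\operatorname{tr}(Y)>0$ becomes $\lambda > \operatorname{tr}(N)$. Evaluating the dual constraint at $\ket{y}$ forces $|\braket{y|\rho_k}|^2 \geq \lambda$ for every $k$, placing all $\ket{\rho_k}$ in a common spherical cap around $\ket{y}$. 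Applying the Schur complement to the $\ket{y}$-block of $\ket{\rho_k}\bra{\rho_k}+N-\lambda\ket{y}\bra{y}$ then yields, for every $k$,
\begin{equation*}
N \;\succeq\; \frac{\lambda}{|\braket{y|\rho_k}|^2 - \lambda}\;\Pi^{\perp}\ket{\rho_k}\bra{\rho_k}\Pi^{\perp},
\end{equation*}
where $\Pi^{\perp} = I - \ket{y}\bra{y}$. Combining these rank-one lower bounds on $N$ (for instance by averaging to get $d\,N \succeq \sum_k \tfrac{\lambda}{|\braket{y|\rho_k}|^2-\lambda}\Pi^{\perp}\ket{\rho_k}\bra{\rho_k}\Pi^{\perp}$, or by pairing against a cleverly chosen tight frame on $\ket{y}^{\perp}$) and using the hypothesis $|\braket{\rho_i|\rho_j}| \leq (d-2)/(d-1)$ to control the resulting cross-terms should give a lower bound on $\operatorname{tr}(N)$ which, together with the cap constraint, exceeds $\lambda$ and yields the contradiction.

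To check that the bound $(d-2)/(d-1)$ is the correct one, I would verify the symmetric boundary case where all pairwise overlaps equal $(d-2)/(d-1)$: taking $\ket{\rho_k} = \sqrt{(d-1)/d}\,\ket{s} + \sqrt{1/d}\,\ket{f_k}$, with $\ket{s}$ a symmetric direction and $\lbrace\ket{f_k}\rbrace$ a regular simplex of $d$ unit vectors summing to zero in $\ket{s}^{\perp}$, the vectors $\ket{y_k} = \sqrt{1/d}\,\ket{s} - \sqrt{(d-1)/d}\,\ket{f_k}$ can be checked to form an orthonormal basis with $\braket{y_k|\rho_k}=0$, giving an explicit projective antidistinguishing measurement exactly at the boundary. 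The main obstacle I anticipate is extending the dual estimate beyond this symmetric configuration: for asymmetric Gram matrices whose overlaps range across the full interval $[0,(d-2)/(d-1)]$, one must variationally allocate the budget $\operatorname{tr}(N)$ among the $d$ rank-one constraints on $N$ and show that the symmetric extremum is genuinely the worst case. A complementary attack by continuously deforming the Gram matrix from the symmetric configuration to the given one while tracking the antidistinguishing POVM is appealing but must cope with the fact that the feasible set is not convex in any obvious parametrization; reconciling these two viewpoints is likely where the real work lies.
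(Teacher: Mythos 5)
You should first be aware that the paper does not prove this statement: it appears there only as Conjecture~1, supported by the trivial $d=2$ case, the $d=3$ case inherited from Ref.~\cite{Caves02}, the explicit symmetric family saturating the bound, and SDP numerics up to $d=5$. There is therefore no proof of the authors' to compare yours against; a complete argument would be a new result, not a reconstruction. Within that understanding, your scaffolding is correct as far as it goes: the dual characterization (the set fails to be antidistinguishable iff there is a Hermitian $Y$ with $Y \preceq \ket{\rho_k}\bra{\rho_k}$ for all $k$ and $\operatorname{tr}(Y)>0$) follows from strong duality with Slater point $\Pi_k = I/d$; the observation that such a $Y$ has at most one positive eigenvalue is right, since a two-dimensional positive subspace would meet $\ket{\rho_k}^{\perp}$; the Schur-complement lower bound on $N$ is computed correctly (modulo the degenerate case $|\braket{y|\rho_k}|^2=\lambda$, where the off-diagonal block must vanish); and your boundary configuration is exactly the paper's motivating example $\ket{\rho_k}=(d-1)^{-1/2}\sum_{i\neq k}\ket{e_i}$ rewritten in symmetric/simplex coordinates, with your $\ket{y_k}$ playing the role of the computational basis.

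The genuine gap is the step you flag yourself: converting the $d$ operator inequalities $N \succeq \frac{\lambda}{|\braket{y|\rho_k}|^2-\lambda}\,\Pi^{\perp}\ket{\rho_k}\bra{\rho_k}\Pi^{\perp}$ into $\operatorname{tr}(N)\geq\lambda$. Averaging gives only $d\,\operatorname{tr}(N) \geq \sum_k \lambda\,(1-|\braket{y|\rho_k}|^2)/(|\braket{y|\rho_k}|^2-\lambda)$, an estimate that makes no reference whatsoever to the pairwise overlaps and degenerates as the $\ket{\rho_k}$ cluster near $\ket{y}$; the hypothesis $|\braket{\rho_i|\rho_j}|\leq (d-2)/(d-1)$ enters your argument nowhere except in the sentence asserting that it ``should'' control the cross-terms, and no mechanism for importing it is given. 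That missing mechanism is not a technical loose end but the entire content of the conjecture. Worse, you should know that the conjecture as stated has since been refuted: subsequent work of Johnston, Russo and Sikora exhibits four states in dimension $4$ with all pairwise overlaps at most $(d-2)/(d-1)=2/3$ that are not antidistinguishable, and proposes a corrected overlap threshold. Consequently the estimate you are seeking cannot exist in the claimed generality; at best your dual framework could be used to prove a weaker sufficient condition, and any purported completion of the argument for the stated bound must contain an error.
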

Some further discussion of and numerical evidence in support of the conjecture is presented in Appendix~C.

\begin{acknowledgments}
\paragraph{Acknowledgement.-}
We thank John-Mark Allen, Matthew Pusey, Matthew Jenssen and Ashley Montanaro for useful discussions. The numerical investigations presented in Appendix~C were carried out by the authors. Some similar investigations were carried out by Matthew Pusey, and we thank him for sharing the results with us. VH is supported by the Clarendon and Keble de Breyne scholarships at the University of Oxford, and an IBM PhD Fellowship. This work was supported by the EPSRC National Quantum Technology Hub in Networked Quantum Information Technologies, and by the Perimeter Institute for Theoretical Physics. Research at Perimeter Institute is supported by the Government of Canada through the Department of Innovation, Science and Economic Development Canada and by the Province of Ontario through the Ministry of Research, Innovation and Science. This project/publication was made possible through the support of a grant from the John Templeton Foundation. The opinions expressed in this publication are those of the author(s) and do not necessarily reflect the views of the John Templeton Foundation. 
\end{acknowledgments}

\bibliographystyle{unsrt}
\bibliography{biblio}

\begin{thebibliography}{10}

\bibitem{Cerf00}
N.~J. {Cerf}, N.~{Gisin}, and S.~{Massar}.
\newblock {Classical Teleportation of a Quantum Bit}.
\newblock {\em \prl}, 84:2521--2524, Mar 2000.

\bibitem{Hardy04}
L.~Hardy.
\newblock {Quantum Ontological Excess Baggage}.
\newblock {\em Studies in History and Philosophy of Science Part B: Studies in
  History and Philosophy of Modern Physics}, 35(2):267 -- 276, 2004.

\bibitem{Toner03}
B.~F. {Toner} and D.~{Bacon}.
\newblock {Communication Cost of Simulating Bell Correlations}.
\newblock {\em Physical Review Letters}, 91(18):187904, October 2003.

\bibitem{PBR12}
M.~F. {Pusey}, J.~{Barrett}, and T.~{Rudolph}.
\newblock {On the Reality of the Quantum State}.
\newblock {\em Nature Physics}, 8:476--479, June 2012.

\bibitem{Caves02}
C.~M. {Caves}, C.~A. {Fuchs}, and R.~{Schack}.
\newblock {Conditions for Compatibility of Quantum-State Assignments}.
\newblock {\em \pra}, 66(6):062111, December 2002.

\bibitem{Kerppo18}
T.~{Heinosaari} and O.~{Kerppo}.
\newblock {Antidistinguishability of Pure Quantum States}.
\newblock {\em Journal of Physics A Mathematical General}, 51(36):365303, Sep
  2018.

\bibitem{Leifer14}
M.~Leifer.
\newblock {Is the Quantum State Real? An Extended Review of $\psi$-ontology
  Theorems}.
\newblock {\em Quanta}, 3(1):67--155, 2014.

\bibitem{Barrett14}
J.~{Barrett}, E.~G. {Cavalcanti}, R.~{Lal}, and O.~J.~E. {Maroney}.
\newblock {No {\ensuremath{\psi}}-Epistemic Model Can Fully Explain the
  Indistinguishability of Quantum States}.
\newblock {\em \prl}, 112:250403, June 2014.

\bibitem{Buhrman98}
H.~Buhrman, R.~Cleve, and A.~Wigderson.
\newblock Quantum vs. classical communication and computation.
\newblock In {\em Proceedings of the Thirtieth Annual ACM Symposium on Theory
  of Computing}, STOC '98, pages 63--68, New York, NY, USA, 1998. ACM.

\bibitem{Brassard99}
G.~{Brassard}, R.~{Cleve}, and A.~{Tapp}.
\newblock {Cost of Exactly Simulating Quantum Entanglement with Classical
  Communication}.
\newblock {\em \prl}, 83:1874--1877, August 1999.

\bibitem{Frank87}
P.~Frankl and V.~R\"{o}dl.
\newblock Forbidden intersections.
\newblock {\em Transactions of the American Mathematical Society},
  300(1):259--286, 1987.

\bibitem{Raz99}
R.~Raz.
\newblock Exponential separation of quantum and classical communication
  complexity.
\newblock In {\em Proceedings of the Thirty-first Annual ACM Symposium on
  Theory of Computing}, STOC '99, pages 358--367, New York, NY, USA, 1999. ACM.

\bibitem{Ambainis03}
A.~Ambainis, L.~J. Schulman, A.~Ta-Shma, U.~V. Vazirani, and A.~Wigderson.
\newblock The quantum communication complexity of sampling.
\newblock {\em SIAM J. Comput.}, 32:1570--1585, 2003.

\bibitem{BarYossef08}
Z.~Bar-Yossef, T.~S. Jayram, and I.~Kerenidis.
\newblock Exponential separation of quantum and classical one-way communication
  complexity.
\newblock {\em SIAM J. Comput.}, 38(1):366--384, April 2008.

\bibitem{Gavinsky06b}
D.~{Gavinsky}, J.~{Kempe}, and R.~{de Wolf}.
\newblock {Exponential Separation of Quantum and Classical One-Way
  Communication Complexity for a Boolean Function}.
\newblock {\em arXiv e-prints}, pages quant--ph/0607174, Jul 2006.

\bibitem{Kerenidis06}
I.~{Kerenidis} and R.~{Raz}.
\newblock {The one-way communication complexity of the Boolean Hidden Matching
  Problem}.
\newblock {\em arXiv e-prints}, pages quant--ph/0607173, Jul 2006.

\bibitem{Gavinsky06}
D.~Gavinsky, J.~Kempe, I.~Kerenidis, R.~Raz, and R.~de~Wolf.
\newblock Exponential separations for one-way quantum communication complexity,
  with applications to cryptography.
\newblock In {\em Proceedings of the Thirty-ninth Annual ACM Symposium on
  Theory of Computing}, STOC '07, pages 516--525, New York, NY, USA, 2007. ACM.

\bibitem{Gavinsky08}
D.~Gavinsky.
\newblock Classical interaction cannot replace a quantum message.
\newblock In {\em Proceedings of the Fortieth Annual ACM Symposium on Theory of
  Computing}, STOC '08, pages 95--102, New York, NY, USA, 2008. ACM.

\bibitem{Klartag10}
O.~Regev and B.~Klartag.
\newblock Quantum one-way communication can be exponentially stronger than
  classical communication.
\newblock In {\em Proceedings of the Forty-third Annual ACM Symposium on Theory
  of Computing}, STOC '11, pages 31--40, New York, NY, USA, 2011. ACM.

\bibitem{Montanaro11}
A.~Montanaro.
\newblock A new exponential separation between quantum and classical one-way
  communication complexity.
\newblock {\em Quantum Info. Comput.}, 11(7-8):574--591, July 2011.

\bibitem{Montanaro16q}
A.~Montanaro.
\newblock Quantum states cannot be transmitted efficiently classically.
\newblock {\em {Quantum}}, 3:154, June 2019.

\bibitem{Buhrman10}
H.~Buhrman, R.~Cleve, Massar S., and R.~de~Wolf.
\newblock Nonlocality and communication complexity.
\newblock {\em Rev. Mod. Phys.}, 82:665--698, Jan. - March 2010.

\bibitem{Montina11}
A.~Montina.
\newblock Communication cost of classically simulating a quantum channel with
  subsequent rank-1 projective measurement.
\newblock {\em Phys. Rev. A}, 84:060303(R), 2011.

\bibitem{Raigorodskii99}
A~M Raigorodskii.
\newblock On a bound in borsuk{\textquotesingle}s problem.
\newblock {\em Russian Mathematical Surveys}, 54(2):453--454, apr 1999.

\bibitem{Buhrman01}
H.~{Buhrman}, R.~{Cleve}, J.~{Watrous}, and R.~{de Wolf}.
\newblock {Quantum Fingerprinting}.
\newblock {\em Physical Review Letters}, 87(16):167902, October 2001.

\bibitem{Perry15}
C.~{Perry}, R.~{Jain}, and J.~{Oppenheim}.
\newblock {Communication Tasks with Infinite Quantum-Classical Separation}.
\newblock {\em Physical Review Letters}, 115(3):030504, July 2015.

\bibitem{ZiWen16}
Z.~Liu, C.~Perry, Y.~Zhu, D.~E. Koh, and S~Aaronson.
\newblock Doubly infinite separation of quantum information and communication.
\newblock {\em Phys. Rev. A}, 93:012347, Jan 2016.

\bibitem{Karanjai18}
A.~{Karanjai}, J.~J. {Wallman}, and S.~D. {Bartlett}.
\newblock {Contextuality bounds the efficiency of classical simulation of
  quantum processes}.
\newblock {\em ArXiv e-prints}, February 2018.

\bibitem{Heinosaari19}
T.~{Heinosaari} and O.~{Kerppo}.
\newblock {Communication of partial ignorance with qubits}.
\newblock {\em arXiv e-prints}, page arXiv:1903.04899, Mar 2019.

\bibitem{Cormen09}
T.~H. Cormen, C.~E. Leiserson, R.~L. Rivest, and C.~Stein.
\newblock {\em Introduction to Algorithms, Third Edition}.
\newblock The MIT Press, 3rd edition, 2009.

\bibitem{Yao79}
A.~Chi-Chih Yao.
\newblock Some complexity questions related to distributive
  computing(preliminary report).
\newblock In {\em Proceedings of the Eleventh Annual ACM Symposium on Theory of
  Computing}, STOC '79, pages 209--213, New York, NY, USA, 1979. ACM.

\bibitem{Yao93}
A.~Chi-Chih Yao.
\newblock Quantum circuit complexity.
\newblock In {\em Proceedings of 1993 IEEE 34th Annual Foundations of Computer
  Science}, pages 352--361, Nov 1993.

\bibitem{Kushilevitz96}
E.~Kushilevitz and N.~Nisan.
\newblock {\em Communication Complexity}.
\newblock Cambridge University Press, New York, NY, USA, 1997.

\bibitem{Rao19}
A.~{Rao} and A.~{Yehudayoff}.
\newblock {\em {Communication Complexity and Applications}}.
\newblock 2019.
\newblock Accessed at \url{homes.cs.washington.edu/~anuprao/pubs/book.pdf} (8
  Jul 2019).

\bibitem{Note1}
As also noted in Ref.~\cite {Stacey14}, Ref.~\cite {Caves02} contains a minor
  error, in which $>$ instead of $\geq $ appears in the second part of the
  condition.

\bibitem{Welch74}
L.~Welch.
\newblock Lower bounds on the maximum cross correlation of signals.
\newblock {\em IEEE Trans. Inf. Theory}, 20(3):397--399, 1974.

\bibitem{Levenshtein78}
V.~I. Levenshtein and G.~A. Kabatiansky.
\newblock On bounds for packings on a sphere and in space.
\newblock {\em Probl. Peredachi Inf.}, 14(1):14--25, 1978.

\bibitem{Levenshtein83}
V.~I. Levenshtein.
\newblock Bounds for packings of metric spaces and some of their applications.
\newblock {\em Probl. Kibern.}, 40:43--110, 1983.

\bibitem{RenesPHD04}
J.~M. {Renes}.
\newblock {\em {Frames, designs, and spherical codes in quantum information
  theory}}.
\newblock PhD thesis, The University of New Mexico, 2004.

\bibitem{Roy11}
A.~Roy and S.~Suda.
\newblock Complex spherical designs and codes.
\newblock {\em Journal of Combinatorial Designs}, 22(3):105--148, 2014.

\bibitem{Zorlein15}
H.~{Zorlein} and M.~{Bossert}.
\newblock {Coherence Optimization and Best Complex Antipodal Spherical Codes}.
\newblock {\em IEEE Transactions on Signal Processing}, 63:6606--6615, December
  2015.

\bibitem{Conde17}
M.~{Heredia Conde} and O.~{Loffeld}.
\newblock {Fast Approximate Construction of Best Complex Antipodal Spherical
  Codes}.
\newblock {\em arXiv e-prints}, page arXiv:1705.03280, May 2017.

\bibitem{Renes04}
J.~M. {Renes}, R.~{Blume-Kohout}, A.~J. {Scott}, and Carlton~M. {Caves}.
\newblock {Symmetric informationally complete quantum measurements}.
\newblock {\em Journal of Mathematical Physics}, 45:2171--2180, June 2004.

\bibitem{Scott09}
A.~J. {Scott} and M.~{Grassl}.
\newblock {SIC-POVMs: A new computer study}.
\newblock {\em ArXiv e-prints}, October 2009.

\bibitem{Scott17}
A.~J. {Scott}.
\newblock {SICs: Extending the list of solutions}.
\newblock {\em ArXiv e-prints}, March 2017.

\bibitem{Stacey14}
Blake~C. {Stacey}.
\newblock {SIC-POVMs and Compatibility among Quantum States}.
\newblock {\em arXiv e-prints}, page arXiv:1404.3774, Apr 2014.

\bibitem{Wooters89}
W.~K. {Wootters} and B.~D. {Fields}.
\newblock {Optimal state-determination by mutually unbiased measurements}.
\newblock {\em Annals of Physics}, 191:363--381, May 1989.

\bibitem{Chabauty53}
C.~{Chabauty}.
\newblock {R\'{e}sultats sur l'empilement de calottes \'{e}gales sur une
  p\'{e}risph\`{e}re de $\mathbb{R}^n$ et correction \`{a} un travail
  ant\'{e}rieur}.
\newblock {\em C.~R.~Acad.~Sci.}, 236:1462--1464, 1953.

\bibitem{Shannon59}
C.~E. {Shannon}.
\newblock Probability of error for optimal codes in a gaussian channel.
\newblock {\em The Bell System Technical Journal}, 38(3):611--656, May 1959.

\bibitem{Wyner65}
A.~D. {Wyner}.
\newblock Capabilities of bounded discrepancy decoding.
\newblock {\em The Bell System Technical Journal}, 44(6):1061--1122, July 1965.

\bibitem{Jenssen18}
M.~Jenssen, F.~Joos, and W.~Perkins.
\newblock On kissing numbers and spherical codes in high dimensions.
\newblock {\em Advances in Mathematics}, 335:307 -- 321, 2018.

\bibitem{HarriganSpekkens}
N.~{Harrigan} and R.~W. {Spekkens}.
\newblock {Einstein, Incompleteness, and the Epistemic View of Quantum States}.
\newblock {\em Foundations of Physics}, 40(2):125--157, Feb 2010.

\bibitem{Montina12}
A.~{Montina}.
\newblock {Epistemic View of Quantum States and Communication Complexity of
  Quantum Channels}.
\newblock {\em \prl}, 109:110501, September 2012.

\bibitem{Alon16}
N.~Alon and J.~H. Spencer.
\newblock {\em The Probabilistic Method}.
\newblock Wiley Publishing, 4th edition, 2016.

\bibitem{Bandyopadhyay14}
S.~{Bandyopadhyay}, R.~{Jain}, J.~{Oppenheim}, and C.~{Perry}.
\newblock {Conclusive exclusion of quantum states}.
\newblock {\em \pra}, 89(2):022336, Feb 2014.

\end{thebibliography}

\onecolumngrid

\section{Appendix A: Multiplicative error Sampling} 
The lower bound derived in Claim~\ref{Thm:OneWay} of the main text also straightforwardly applies to multiplicative error sampling.  Let $S$ be the set of states defined in Claim~\ref{Corollary:Triples} of the main text and let $\Pi' = \lbrace \Pi'_z \,| z \in \lbrace j,k,m \rbrace \rbrace$ be a measurement antidistinguishing a triple of states $\lbrace \ket{\rho_j}, \ket{\rho_k}, \ket{\rho_m} \rbrace \subseteq S$, chosen such that $\Pi_{z}' \ket{\rho_z} = 0$ for all $z \in \lbrace j, k, m \rbrace$. Let $p(z \, | \Pi', \ket{\rho_i})$ denote the probability of measuring an outcome `$z$' by applying $\Pi'$  to $\ket{\rho_i}$. A $\epsilon$-multiplicative sampling protocol samples from a distribution $\tilde{p}(z \, | \Pi', \ket{\rho_i})$, such that: \begin{align} \left|\tilde{p}(z \, |   \Pi' , \ket{\rho_i}) - p(z\, | \Pi', \ket{\rho_i}) \right| \leq \epsilon \, p(z\, |\Pi',\ket{\rho_i}), \label{Eq:RelErr} \end{align} for some $\epsilon \geq 0$ and all inputs.
Notice that: 
 \begin{align} p(z|  \, \Pi', \ket{\rho_i})  = 0 &\implies \tilde{p}(z|  \, \Pi', \ket{\rho_i}) = 0, \end{align} 
 holds for arbitrary $\epsilon$. Even as $\epsilon \rightarrow \infty $, the mutliplicative error simulation protocol cannot output an outcome that is assigned zero probability in the exact case and hence also solves the relation $R$ defined in Eq.~\ref{Eq:Relation} of the main text. This means that classical protocols for the sampling task are subject to the same classical lower bounds as the zero-error protocol for $R$.

\section{Appendix B: Alternative simple proof for the existence of exponential-sized complex spherical codes.} 
\label{App:A}

The following argument, from Refs.~\cite{Buhrman01,Alon16}, shows that complex spherical codes exist that are exponentially large in the dimension. Take two random $d$-dimensional pure states:
\begin{equation} 
\begin{aligned} \ket{v} &= \frac{1}{\sqrt{d}}\sum_{i=1}^d v_i \ket{i},& \ket{w} &= \frac{1}{\sqrt{d}}\sum_{i=1}^d w_i\ket{i}, 
\end{aligned} 
\end{equation}
where $w_i, v_i$ are Rademacher random variables with $\text{Pr}(v_i = \pm1) = \text{Pr}(w_i = \pm1) = 1/2$. Their inner product is: 
\begin{align}
\braket{v | w} &= \frac{1}{d} \sum_{i=1}^d v_i w_i = \frac{1}{d} \sum_{i=1}^d X_i,
\end{align}
where $X_i$ is again a Rademacher random variable with $
\text{Pr}(X_i = \pm1) = 1/2.$
The probability that $|\braket{v|w}| > 1/2$ is upper bounded by the Chernoff-Hoeffding inequality: 
\begin{align}
\text{Pr}\left( |\braket{v|w}| > \frac{1}{2} \right) = \text{Pr}\left(\left|\sum_i^d X_i\right| > \frac{d}{2} \right) \leq 2e^{-\frac{d}{8}}. \end{align}
The probability that a set $S$ of such random vectors contains a pair $\ket{v_i}, \ket{w_j} \in S, \; i \neq j$ with overlap greater than $1/2$ is given by:
\begin{align}
\text{Pr}\left( |\braket{v_i | w_j }| > \frac{1}{2} \right) \leq 2 {|S| \choose 2}e^{-\frac{d}{8}}  < |S|^2 e^{-\frac{d}{8}}.
\label{Eq:tailbound}
\end{align}
As soon as this probability falls below $1$, there exists a set $S$ of such random states, so that $|\braket{v_i | w_j}| \leq 1/2$ for any pair. From Eq.~(\ref{Eq:tailbound}), $|S|  = e^{\frac{d}{16}}$. 
Using such a set $S$ for the task defined in the main text, the classical one-way communication complexity is at least $\lceil 0.09\, d -1 \rceil$ bits.

\section{Appendix C:  antidistinguishability conjecture}

Conjecture~\ref{antconj} of the main text, reproduced here, implies a separation of $\log d$ qubits vs $\Omega(d \log d)$ bits for exact simulation. 
\addtocounter{conjecture}{-1}
\begin{conjecture}
Let $|\rho_1\rangle , \ldots , |\rho_d\rangle$ be $d$ pure states. If $|\langle \rho_i | \rho_j\rangle| \leq (d-2)/(d-1)$ for all $i\ne j$, then the states are antidistinguishable.
\end{conjecture}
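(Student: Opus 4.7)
The plan is to prove Conjecture~\ref{antconj} by semidefinite programming duality, anchored by an explicit verification of the extremal symmetric configuration. Antidistinguishability of $|\rho_1\rangle,\dots,|\rho_d\rangle$ is the existence of $\Pi_i\geq 0$ with $\sum_i\Pi_i=I$ and $\langle\rho_i|\Pi_i|\rho_i\rangle=0$, equivalently the vanishing of the SDP
\[
\min_{\Pi_i\geq 0,\;\sum_i\Pi_i=I}\sum_i\langle\rho_i|\Pi_i|\rho_i\rangle .
\]
Its Lagrangian dual is $\max_{W=W^\dagger,\,W\leq|\rho_i\rangle\langle\rho_i|\;\forall i}\operatorname{tr}(W)$, and strong duality holds, so it suffices to show that no feasible $W$ has positive trace.

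First I would extract the structure of such a hypothetical $W$. The constraint $W\leq|\rho_i\rangle\langle\rho_i|$ forces $\langle v|W|v\rangle\leq 0$ for every $|v\rangle\perp|\rho_i\rangle$, so the positive eigenspace of $W$ meets each $(d-1)$-dimensional subspace $|\rho_i\rangle^\perp$ trivially and is therefore at most one-dimensional. Writing $W=\mu|w\rangle\langle w|-W_-$ with $\mu\geq 0$, $W_-\geq 0$, and $W_-|w\rangle=0$, a Schur-complement computation on $|\rho_i\rangle\langle\rho_i|-W\geq 0$ in the block decomposition $\mathbb{C}|w\rangle\oplus|w\rangle^\perp$ yields
\[
W_-\;\geq\;\frac{\mu(1-a_i)}{a_i-\mu}\,|\phi_i\rangle\langle\phi_i|\qquad\text{for every }i,
\]
where $a_i=|\langle w|\rho_i\rangle|^2$ and $|\phi_i\rangle$ is the unit vector in $|w\rangle^\perp$ along the projection of $|\rho_i\rangle$. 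Since $\operatorname{tr}(W)=\mu-\operatorname{tr}(W_-)$, the target contradiction is $\operatorname{tr}(W_-)\geq\mu$.

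The pivotal test case is the symmetric one, where all $\langle\rho_i|\rho_j\rangle=(d-2)/(d-1)$ for $i\neq j$. A short algebraic check using $\sum_{\ell\neq i}\langle\rho_i|\rho_\ell\rangle=d-2$ and the identity $(d-1)^2\cdot(d-2)/(d-1)-(d-1)(d-2)=0$ shows that
\[
|\pi_i\rangle=\frac{1}{\sqrt{d-1}}\Bigl(\sum_{j\neq i}|\rho_j\rangle-(d-2)|\rho_i\rangle\Bigr)
\]
is orthonormal with $\langle\pi_i|\rho_i\rangle=0$ precisely at this overlap, so $\{|\pi_i\rangle\langle\pi_i|\}$ is an antidistinguishing projective measurement and $(d-2)/(d-1)$ is the unique value at which this construction closes. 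On the dual side, choosing $|w\rangle\propto\sum_i|\rho_i\rangle$ gives $a_i=(d-1)/d$ for all $i$ and makes the $|\phi_i\rangle$ a tight frame in $|w\rangle^\perp$ with $\sum_i|\phi_i\rangle\langle\phi_i|=\tfrac{d}{d-1}I_{d-1}$. The POVM $Y_i=\tfrac{d-1}{d}|\phi_i\rangle\langle\phi_i|$ is then feasible for the dual of the auxiliary SDP $\min\operatorname{tr}(X)$ s.t.\ $X\geq \tfrac{\mu(1-a_i)}{a_i-\mu}|\phi_i\rangle\langle\phi_i|$, and yields $\operatorname{tr}(W_-)\geq\mu(d-1)/((d-1)-\mu d)>\mu$ for every $\mu>0$, closing the contradiction in the symmetric case.

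The main obstacle is extending this dual lower bound to non-symmetric configurations. The $d$ operator inequalities on $W_-$ do not commute, the weights $c_i=\mu(1-a_i)/(a_i-\mu)$ depend nonlinearly on the $a_i$, and the overlap hypothesis couples the $a_i$, the phases of $\alpha_i=\langle w|\rho_i\rangle$, and $\langle\phi_i|\phi_j\rangle$ only through the nonlinear identity $\langle\rho_i|\rho_j\rangle=\bar\alpha_i\alpha_j+\sqrt{(1-a_i)(1-a_j)}\langle\phi_i|\phi_j\rangle$. I would try two routes in parallel: (i) a symmetrisation argument averaging a hypothetical $W$ over a phase-group action on the configuration that preserves the overlap bound and reduces to the symmetric case above, and (ii) a direct construction of a dual POVM $\{Y_i\}$ on $|w\rangle^\perp$ from the $|\phi_i\rangle$ and the $a_i$, exploiting the fact that $d$ vectors in $(d-1)$ dimensions are over-complete by exactly one---the structural feature that singles out the constant $(d-2)/(d-1)$. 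Pinning down why that constant, and no weaker one, closes the inequality is where I expect the real work to lie.
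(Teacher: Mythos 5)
The statement you are trying to prove is, in the paper, exactly that: a \emph{conjecture}. The authors offer no proof --- only the observation that it holds for $d=2$ and (via Ref.~\cite{Caves02}) for $d=3$, the explicit symmetric example $|\rho_i\rangle = \frac{1}{\sqrt{d-1}}\sum_{j\ne i}|e_j\rangle$ showing the constant $(d-2)/(d-1)$ is tight, and SDP-based numerics up to $d=5$. So there is no ``paper proof'' to compare against, and the relevant question is whether your argument closes the conjecture. It does not. Your SDP setup is correct (the dual is indeed $\max \operatorname{tr}(W)$ over Hermitian $W \le |\rho_i\rangle\langle\rho_i|$, and the observation that the positive eigenspace of a feasible $W$ is at most one-dimensional is sound), and your verification of the fully symmetric configuration is correct --- indeed the orthonormal antidistinguishing basis $|\pi_i\rangle \propto \sum_{j\ne i}|\rho_j\rangle - (d-2)|\rho_i\rangle$ is just the dual description of the paper's own motivating example. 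But that configuration is a single point on the boundary of the hypothesis set, with all overlaps real, positive, and exactly equal to $(d-2)/(d-1)$; the conjecture quantifies over all configurations with $|\langle\rho_i|\rho_j\rangle| \le (d-2)/(d-1)$, including unequal moduli and arbitrary phases. You explicitly defer this to ``two routes I would try,'' which is where the entire content of the conjecture lives.

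Neither proposed route is obviously viable as stated. Route (i), symmetrisation, requires a group action on configurations that preserves the overlap hypothesis, maps feasible dual witnesses to feasible dual witnesses, and has the symmetric configuration as its fixed point; averaging the states themselves destroys purity, and averaging $W$ over an orbit only respects constraints of the form $W \le |\rho_i\rangle\langle\rho_i|$ if the orbit permutes the \emph{same} set of projectors, which returns you to the case of an already-symmetric configuration. More fundamentally, a reduction to the symmetric case presupposes that it is the extremal (hardest) configuration, i.e.\ a monotonicity statement --- that shrinking overlaps can only help antidistinguishability --- which is itself unproven and is essentially a restatement of the conjecture. Route (ii) is the problem itself: constructing the dual certificate for arbitrary $|\phi_i\rangle$, $a_i$, and phases. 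A further caution: the Schur-complement bound $W_- \ge \frac{\mu(1-a_i)}{a_i-\mu}|\phi_i\rangle\langle\phi_i|$ presumes $a_i > \mu$ for every $i$ (otherwise feasibility already fails or the bound degenerates), and your trace lower bound via a tight frame uses $\sum_i |\phi_i\rangle\langle\phi_i| = \frac{d}{d-1}I_{d-1}$, which holds only in the symmetric case; for generic configurations the $|\phi_i\rangle$ form no such frame and the $c_i$ are incomparable. In short: the framework is reasonable and consistent with how the paper's Appendix~C frames the problem (antidistinguishability as an SDP), but the proof has a gap exactly where the conjecture is hard, and the statement should still be regarded as open.
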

It is obvious that the conjecture holds with $d=2$, and it follows from the results of Ref.~\cite{Caves02} that the conjecture holds with $d=3$.
In order to gain some intuition for why the conjecture might be true in all dimensions, first consider a generic set of states $|\rho_1\rangle , \ldots , |\rho_d\rangle$. Ref.~\cite{Kerppo18} shows that if any triple of the states is antidistinguishable, then it follows that the set of $d$ states is antidistinguishable. On the other hand, it does not follow that if the set of $d$ states is antidistinguishable, then any triple must be antidistinguishable. Hence the condition that $d$ states is antidistinguishable is logically weaker than the condition that any triple of them is antidistinguishable. If the states satisfy $|\langle \rho_i | \rho_j\rangle| \leq 1/2$ for all $i\ne j$, then the stronger condition holds \cite{Caves02}, hence it is natural to suppose that a similar statement, with $1/2$ on the right hand side replaced by a larger number, suffices for the weaker condition.  

Second, consider the set of $d$-dimensional states:
\begin{eqnarray*}
|\rho_1\rangle &=& \frac{1}{\sqrt{d-1}} \left( |e_2\rangle + |e_3\rangle + \cdots + |e_d\rangle \right) \\
|\rho_2\rangle &=& \frac{1}{\sqrt{d-1}} \left( |e_1\rangle + |e_3\rangle + \cdots + |e_d\rangle \right) \\
&\cdots& \\
|\rho_d\rangle &=& \frac{1}{\sqrt{d-1}} \left( |e_1\rangle + |e_2\rangle + \cdots + |e_{d-1}\rangle \right),
\end{eqnarray*}
where $\{ |e_i\rangle \}$ is an orthonormal basis. This set is antidistinguishable by construction and satisfies $|\langle \rho_i | \rho_j \rangle | = (d-2) / (d-1)$ for all $i\ne j$, which motivates the particular choice of function on the right hand side of the conjectured sufficient condition.

Finally, Fig.~\ref{Fig:Evidence} displays some numerical evidence for Conjecture~\ref{antconj}. Note that for a given set of states in $d$ dimensions, determining whether they are antidistinguishable or not corresponds to solving a semi-definite program (SDP) \cite{Bandyopadhyay14}. For each $d=2,3,4,5$, we generated a set of
  $150000$ sets of $d$ vectors in $d$ dimensions, where each vector is chosen independently, and uniformly according to the Haar measure. For each set of vectors, the SDP is solved, in order to determine whether the set is antidistinguishable. For those sets that are not antidistinguishable, the quantity $\alpha = \max_{i\ne j} |\langle \rho_i | \rho_j \rangle |$ is recorded. The shaded region of the graph shows, for each dimension, the minimum value of $\alpha$ obtained, and the dashed line shows the value of $(d-2)/(d-1)$, with lines rather than points used for clarity. A counterexample to the conjecture would appear as the dashed line crossing the non-shaded region. The evidence for the conjecture consists in the fact that the shaded region cleaves fairly closely to the dashed line, yet no counterexamples were found.
\begin{figure*}[h]
\includegraphics[scale=.6]{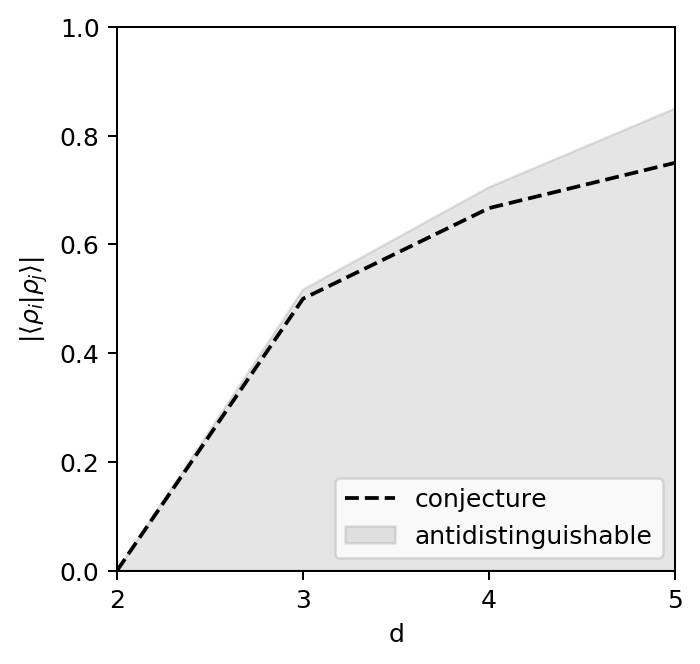}
\caption{Numerical evidence for Conjecture~\ref{antconj} of the main text. }
\label{Fig:Evidence}
\end{figure*}

\end{document}